\documentclass[11pt]{amsart}



\usepackage{url}
\usepackage[T1]{fontenc}
\usepackage[dvips]{graphicx}

\usepackage[usenames]{color}

\usepackage{float}

\usepackage{amsmath}
\usepackage{amssymb}

\usepackage{arydshln}





\newtheorem{ThmCount}{Definition}[section]
\newtheorem{theorem}[ThmCount]{Theorem}

\newtheorem{lemma}[ThmCount]{Lemma} 

\newtheorem{note}{Remark}[section]


\def\G{\mathcal{G}}
\def\SG{\mathcal{S}}

\def\u{{\bf u}}

\def\S{{\bf S}}
\def\F{{\bf F}}
\def\x{{\bf x}}
\def\y{{\bf y}}

\def\X{{\bf X}}

\def\BSigma{{\bf \Sigma}}

\def\Phat{\hat{P}}
\def\phat{\hat{p}}

\newcommand{\rem}[1]{}

\begin{document}
\title{The Probabilistic Structure of Discrete Agent-Based Models}

\maketitle

\begin{abstract}
This paper describes a formalization of agent-based models (ABMs) as random walks on regular graphs and relates the symmetry group of those graphs to a coarse-graining of the ABM that is still Markovian.
An ABM in which $N$ agents can be in $\delta$ different states leads to a Markov chain with $\delta^N$ states.
In ABMs with a sequential update scheme by which one agent is chosen to update its state at a time, transitions are only allowed between system configurations that differ with respect to a single agent.
This characterizes ABMs as random walks on regular graphs.
The non-trivial automorphisms of those graphs make visible the dynamical symmetries that an ABM gives rise to because sets of micro configurations can be interchanged without changing the probability structure of the random walk.
This allows for a systematic loss-less reduction of the state space of the model.
\end{abstract}

\tableofcontents

\section{Introduction}

Agent-based models (ABMs) are an attempt to understand how macroscopic regularities may emerge through processes of self-organization in systems of interacting agents. 
The approach is first and foremost a computational methodology and the mathematical formalization of ABMs is still in its infancy.
This is probably due to the fact that a major motivation in the development of ABMs has been to relax a series of unrealistic assumptions made in other modeling frameworks just in order to keep mathematical tractability; namely, rationality, perfect information, agent homogeneity, among others.
The other side of the coin is that the focus on computer models and algorithms makes difficult the comparison of different models and also complicates a rigorous analysis of the model behavior.
In fact, the problems of code verification and model comparison including the discussion of standards for the replication of ABMs have nowadays become an area of research in its own (see e.g., \cite{Hales2003,Grimm2006,Galan2009}).
Many of those issues would probably be resolved with a sound mathematical formulation of an ABM.
On the other hand, it is also clear that the precise mathematical specification of a high-dimensional system of heterogeneous interacting agents along with their update mechanisms can be cumbersome. 

Agent-based systems are dynamical systems.
Typically implemented on a computer, the time evolution is computed as an iterative process -- an algorithm -- in which agents are updated according to the specified rules.
ABMs usually also involve a certain amount of stochasticity, because the agent choice and sometimes also the choice among different behavioral options is random.
This is why Markov chain theory is a good candidate for the mathematical formalization of ABMs.

To the authors knowledge, the first systematic approach to the development of mathematical formalism for ABMs in general is due to Laubenbacher and co-workers.
Ref. \cite{Laubenbacher2009} reviews existing formal frameworks that have the potential to model ABMs, such as cellular automata and finite dynamical systems and argue for the latter as an appropriate mathematical framework.
The possibility of using Markov chains in the analysis of ABMs has been pointed out in \cite{Izquierdo2009}.
The main idea is to consider all possible configurations of the agent system as the state space $\BSigma$ of a huge Markov chain.
While Ref. \cite{Izquierdo2009} mainly relies on numerical computations to estimate the stochastic transition matrices of the models, we have shown how to derive explicitly the transition probabilities $\Phat$ in terms of the update function $\u$ and a probability distribution $\omega$ accounting for the stochastic parts in the model (\cite{Banisch2012son,Banisch2013eccs}).
From general ABM to a particular class of models we refer to as single-step dynamics, this paper discusses in detail how to derive a microscopic Markov chain description (micro chain).
It turns out that ABMs with a sequential update scheme can be conceived as random walks on regular graphs.

This, in turn, hints at the possibility of reducing the state space of the microscopic Markov chain by exploiting systematically the dynamical symmetries that an ABM gives rise to.
Namely, the existence of non-trivial automorphisms of the micro chain tells us that certain sets of micro configurations can be interchanged without changing the probability structure of the random walk.
These sets of micro states can be aggregated or lumped into a single macro state and the resulting macro-level process is still a Markov chain.
In Markov chain theory, such a state space reduction by which no information about the dynamical behavior is lost is known as lumpability \cite{Burke1958,Rosenblatt1959,Kemeny1976,Rogers1981,Buchholz1994,Goernerup2008}.

Throughout the paper we  use the Voter Model (VM from now on) as a simple paradigmatic example (\cite{Kimura1964,Castellano2009}, among many others).
In the VM, agents can adopt two different states, which we may denote as white $\square$ and black $\blacksquare$.
The attribute could account for the opinion of an agent regarding a certain issue, its approval or disapproval regarding certain attitudes.
In an economic context $\blacksquare$ and $\square$ could encode two different behavioral strategies, or, in a biological context, the occurrence of mutants in a population of individuals.
The iteration process implemented by the VM is very simple.
At each time step, an agent $i$ is chosen at random along with one of its neighboring agents $j$ and one of them imitates the state of the other (by convention we assume the first to imitate the second).
In the long run, the model leads to a configuration in which all agents have adopted the same state (either $\square$ or $\blacksquare$).
In the context of biological evolution, this has been related to the fixation or extinction of a mutant in a population.
The VM has also been interpreted as a simplistic form of a social influence process by which a shared convention is established in the entire population.

This paper is organized as follows.
In Section 2, we discuss the general structure of ABMs in form of a graph of the possible model transitions.
Markovianity of the ABM process on the graph can be established by a so--called random mapping representation (Section 3).
In Section \ref{cha:3.SSD} a class of models giving rise to single-step dynamics and therefore to random walks on regular graphs is discussed.
Section 5 relates the symmetries in those graphs to partitions of the micro process with respect to which the chain is lumpable and Section 6 illustrates this at the example of a single-step model with $N$ agents that can be in three different states.
We summarize these results in Section 7.

\section{The Grammar of an ABM}

Let us consider an abstract ABM with finite configuration space $\BSigma = \S^N$ (meaning that there are $N$ agents with attributes $x_i \in \S$).
Any iteration of the model  (any run of the ABM algorithm) maps a configuration $\x \in \BSigma$ to another configuration $\y \in \BSigma$.
In general, the case that no agent changes such that $\x = \y$ is also possible.
Let us denote such a mapping by $F_z: \BSigma \rightarrow \BSigma$ and denote the set of all possible mappings by $\mathcal{F}$.
Notice that any element of $\mathcal{F}$ can be seen as a word of length $|\BSigma|$ over an $|\BSigma|$-ary alphabet, and there are $|\BSigma|^{|\BSigma|}$ such words \cite{Flajolet1990}:3.

Any $F_z \in \mathcal{F}$ induces a directed graph $(\BSigma,F_z)$ the nodes of which are the elements in $\BSigma$ (i.e., the agent configurations) and edges the set of ordered pairs $(\x,F_z(\x)), \forall \x\in\BSigma$.
Such a graph is called functional graph of $F_z$ because it displays the functional relations of the map $F_z$ on $\BSigma$.
That is, it represents the logical paths induced by $F_z$ on the space of configurations for any initial configuration $\x$.

Each iteration of an ABM can be thought of as a stochastic choice out of a set of deterministic options.
For an ABM in a certain configuration $\x$, there are usually several options (several $\y$) to which the algorithm may lead with a well-defined probability (see Fig. \ref{fig:AllAgentsChoicesVM}).
Therefore, in an ABM, the transitions between the different configurations $\x,\y,\ldots \in \BSigma$ are not defined by one single map $F_z$, but there is rather a subset $\mathcal{F}_Z \subset \mathcal{F}$ of maps out of which one map is chosen at each time step with a certain probability.
Let us assume we know all the mappings $\mathcal{F}_Z = \{F_1,\ldots,F_z,\ldots,F_n\}$ that are realized by the ABM of our interest.
With this, we are able to define a functional graph representation by $(\BSigma,\mathcal{F}_Z)$ which takes as the nodes all elements of $\BSigma$ (all agent configurations) and an arc $(\x,\y)$ exists if there is at least one $F_z \in \mathcal{F}_Z$ such that $F_z(\x) = \y$.
This graph defines the >>grammar<< of the system for it displays all the logically possible transitions between any pair of configurations of the model.

Consider the VM with three agents as an example.
In the VM agents have two possible states ($\S = \{\square,\blacksquare\}$) and the configuration space for a model of three agents is $\BSigma = \{\square,\blacksquare\}^3$.
In the iteration process, one agent $i$ is chosen at random along with one of its neighbors $j$ and agent $i$ imitates the state of $j$.
This means that $y_i = x_j$ after the interaction event.
Notice that once an agent pair $(i,j)$ is chosen the update is defined by a deterministic map $\u: \S^2 \rightarrow \S$.
Stochasticity enters first because of the random choice of $i$ and second through the random choice of one agent in the neighborhood.
Let us look at an example with three agents in the configuration $\x = (\square \blacksquare \blacksquare)$.
If the first agent is chosen ($i = 1$ and $x_1 = \square$) then this agent will certainly change state to $y_1 = \blacksquare$ because it will in any case meet a black agent.
For the second and the third agent ($i = 2$ or $i = 3$) the update result depends on whether one or the other neighbor is chosen because they are in different states.
Noteworthy, different agent choices may lead to the same configuration.
Here, this is the case if the agent pair $(2,3)$ or $(3,2)$ is chosen in which case the agent ($2$ or $3$) does not change its state because $x_2 = x_3$.
Therefore we have $\y = \x$ and there are two paths realizing that transition.

\begin{figure}[htbp]
	\centering
		\includegraphics[width=0.50\textwidth]{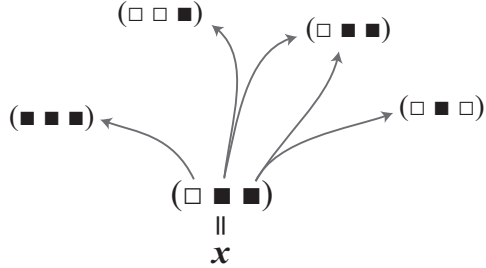}
	\caption{Possible paths from configuration $\x = (\square \blacksquare \blacksquare)$ in a small VM of three agents.}
	\label{fig:AllAgentsChoicesVM}
\end{figure}

In practice, the explicit construction of the entire functional graph may rapidly become a tedious task due to the huge dimension of the configuration space and the fact that one needs to check if $F_z(\x) = \y$ for each mapping $F_z \in \mathcal{F}_Z$ and all pairs of configurations $\x,\y$.
On the other hand, the main interest here is a theoretical one, because, as a matter of fact, a representation as a functional graph of the form $\Gamma = (\BSigma,\mathcal{F}_Z)$ exists for any model that comes in form of an iterated computer algorithm.
It is therefore a quite general way of formalizing ABMs and, as we will see in the sequel, it allows under some conditions to verify the Markovianity of the models at the micro level.

\section{From Functional Graphs to Markov Chains}

A functional graph $\Gamma = (\BSigma,\mathcal{F}_Z)$ defines the >>grammar<< of an ABM in the sense that it shows all possible transitions enabled by the model.
It is the first essential step in the construction of the Markov chain associated with the ABM at the micro level because there is a non-zero transition probability only if there is an arrow in the functional graph. 
Consequently, all that is missing for a Markov chain description is the computation of the respective transition probabilities.

For a class of models, including the VM, this is relatively simple because we can derive a random mapping representation \cite{Levin2009}:6/7 directly from the ABM rules.
Namely, if $F_{z_1}, F_{z_2},\ldots$ is a sequence of independent random maps, each having the same distribution $\omega$,  and $S_0 \in \BSigma$ has distribution $\mu_0$, then the sequence $S_0, S_1,\ldots$ defined by
\begin{equation}
S_t= F_{z_t}  (S_{t-1}), t \geq 1
\label{RMR1}
\end{equation}
is a Markov chain on $\BSigma$ with transition matrix $\Phat$:
\begin{equation}\label{RMR2}
\hat{P}(\x,\y)  = {\bf{Pr}_{\omega}}[z, F_z (\x) =  \y]; \x,\y \in \BSigma.
\end{equation}
Conversely \cite{Levin2009}, any Markov chain has a random map representation (RMR). 
Therefore, in that case, (\ref{RMR1}) and (\ref{RMR2}) may be taken as an equivalent definition of a Markov chain.
This is particularly useful in our case, because it shows that an ABM which can be described as above is, from a mathematical point of view, a Markov chain.
This includes several models described in \cite{Izquierdo2009}.

For general ABMs the explicit construction of a RMR can be cumbersome because it requires the dissection of stochastic and deterministic elements of the iteration procedure of the model.
In the VM, this separation is clear-cut and therefore a RMR is obtained easily.
In the VM, the random part consists of the choice of two connected agents $(i,j)$.
Once this choice is made we know that $y_i = x_j$ by the interaction rule.
This is sufficient to derive the >>grammar<< of the VM, because we need only to check one by one for all possible choices $(i,j)$ which transitions this choice induces on the configuration space.
For a system of three agents, with all agents connected to the other two, the set of functions $\mathcal{F}_Z = \{F_1,\ldots,F_z,\ldots,F_n\}$ is specified in Table \ref{tab:RMR_VM3}.
Notice that with three agents, there are 8 possible configurations indexed here by $a,b,\ldots,h$.
Moreover, there are 6 possible choices for $(i,j)$ such that $\mathcal{F}_Z$ consists of $n=6$ mappings.

\begin{table}[h]
\centering
\begin{tabular}{|c|c|c c c c c c c c|}
\hline 
$z$ 	& $(i,j)$ 	& a & b & c & d & e & f & g & h\\ 
	&  	& $\blacksquare\blacksquare\blacksquare$ & $\blacksquare\blacksquare\square$ & $\blacksquare\square\blacksquare$ & $\square\blacksquare\blacksquare$ & $\blacksquare\square\square$ & $\square\blacksquare\square$ & $\square\square\blacksquare$ & $\square\square\square$\\ 
\hline
$1$ 	& $(1,2)$ & a & b & g & a & h & b & g & h\\ 
$2$ 	& $(1,3)$ & a & f & c & a & h & f & c & h\\ 
$3$ 	& $(2,1)$ & a & b & a & g & b & h & g & h\\ 
$4$ 	& $(3,1)$ & a & a & c & f & c & f & h & h\\ 
$5$ 	& $(2,3)$ & a & e & a & d & e & h & d & h\\ 
$6$ 	& $(3,2)$ & a & a & e & d & e & d & h & h\\ 
\hline
\end{tabular} 
\caption{$\mathcal{F}_Z$ for the VM with three agents.}
\label{tab:RMR_VM3}
\end{table}

Each row of the table represents a mapping $F_z: \BSigma \rightarrow \BSigma$ by listing to which configuration $\y$ the respective map takes the configurations $a$ to $h$.
The first row, to make an example, represents the choice of the agent pair $(1,2)$.
The changes this choice induces depend on the actual agent configuration $\x$.
Namely, for any $\x$ with $x_1 = x_2$ we have $F_1(\x) = F_{(1,2)}(\x) = \x$.
So the configurations $a,b,g,h$ are not changed by $F_{(1,2)}$.
For the other configurations it is easy to see that $(\blacksquare\square\blacksquare)\rightarrow(\square\square\blacksquare)$ ($c \rightarrow g$), $(\square\blacksquare\blacksquare)\rightarrow(\blacksquare\blacksquare\blacksquare)$ ($d \rightarrow a$), $(\blacksquare\square\square)\rightarrow(\square\square\square)$ ($e \rightarrow h$), and $(\square\blacksquare\square)\rightarrow(\blacksquare\blacksquare\square)$ ($f \rightarrow b$).
Notice that the two configurations $(\square\square\square)$ and $(\blacksquare\blacksquare\blacksquare)$ with all agents equal are not changed by any map and correspond therefore to the final configurations of the VM.

In the RMR, we can use the possible agent choices $(i,j)$ in Table \ref{tab:RMR_VM3} directly to index the collection of maps $F_{(i,j)} \in \mathcal{F}_Z$.
We denote as $\omega(i,j)$ the probability of choosing the agent pair $(i,j)$ which corresponds to choosing the map $F_{(i,j)}$.
It is clear that we can proceed in this way in all models where the stochastic part concerns only the choice of agents.
Then, the distribution $\omega$ is independent of the current system configuration and the same for all times ($\omega(z_t) = \omega(z)$).
In this case, we obtain for the transition probabilities
\begin{equation}
\hat{P}(\x,\y)  = {\bf{Pr}_{\omega}}[(i,j), F_{(i,j)} (\x) =  \y] = \sum\limits_{\substack{(i,j):\\F_{(i,j)}(\x) = \y}}^{} \omega(i,j). 
\label{eq:PhatVM01}
\end{equation}
That is, the probability of transition from $\x$ to $\y$ is the conjoint probability $\sum\omega(i,j)$ of choosing an agent pair $(i,j)$ such that the corresponding map takes $\x$ to $\y$ (i.e., $F_{(i,j)} (\x) =  \y$).

\section{Single-Step Dynamics and Random Walks on Regular Graphs}
\label{cha:3.SSD}

In the sequel, we focus on a class of models which we refer to as \emph{single-step dynamics}.
They are characterized by the fact that only one agent changes at a time step.
Notice that this is very often the case in ABMs with a sequential update scheme and that sequential update is, as a matter of fact, the most typical iteration scheme in ABMs.
In terms of the >>grammar<< of these models, this means that non-zero transition probabilities are only possible between system configuration that differ in at most one position.
And this gives rise to random walks on regular graphs.

Consider a set of $N$ agents each one characterized by individual attributes $x_i$ that are taken in a finite list of possibilities $\S = \{1,\ldots,\delta\}$.
In this case, the space of possible agent configurations is $\BSigma = \S^N$.
Consider further a deterministic update function $\u: \S^r \times \Lambda \rightarrow \S$ which takes configuration $\x \in \BSigma$ at time $t$ to configuration $\y \in \BSigma$ at $t+1$ by
\begin{equation}
y_i = \u(x_i,x_j,\ldots,x_k,\lambda).
\label{eq:3.Update01}
\end{equation}
To go from one time step to the other in agent systems, usually, an agent $i$ is chosen first to perform a step.
The decision of $i$ then depends on its current state $(x_i)$ and the attributes of its neighbors $(x_j,\ldots,x_k)$.
The finite set $\Lambda$ accounts for a possible stochastic part in the update mechanism such that different behavioral options are implemented by different update functions $\u(\ldots,\lambda_1)$, $\u(\ldots,\lambda_2)$ etc.
Notice that for the case in which the attributes of the agents $(x_i,x_j,\ldots,x_k)$ uniquely determine the agent decision we have $\u: \S^r \rightarrow \S$ which strongly resembles the update rules implemented in cellular automata (CA).

As opposed to classical CA, however, a sequential update scheme is used in the class of models considered here.
In the iteration process, first, a random choice of agents along with a $\lambda$ to index the possible behavioral options is performed with probability $\omega(i,j,\ldots,k,\lambda)$.
This is followed by the application of the update function which leads to the new state of agent $i$ by Eq. (\ref{eq:3.Update01}).

Due to the sequential application of an update rule of the form $\u: \S^r \times \Lambda \rightarrow \S$ only one agent (namely agent $i$)  changes at a time so that all elements in $\x$ and $\y$ are equal except that element which corresponds to the agent that was updated during the step from $\x$ to $\y$.
Therefore, $x_j = y_j, \forall j \neq i$ and $x_i \neq y_i$.
We call $\x$ and $\y$ adjacent and denote this by $\x \stackrel{i}{\sim}  \y$.

It is then also clear that a transition from $\x$ to $\y$ is possible if $\x \stackrel{}{\sim} \y$.
Therefore, the adjacency relation $\stackrel{}{\sim}$ defines the >>grammar<< $\Gamma_{SSD}$ of the entire class of single-step models.
Namely, the existence of a map $F_z$ that takes $\x$ to $\y$, $\y = F_z(\x)$, implies that $\x \stackrel{i}{\sim}  \y$ for some $i \in \{1,\ldots,N\}$.
This means that any ABM that belongs to the class of single-step models performs a walk on $\Gamma_{SSD}$ or on a subgraph of it.

Let us briefly consider the structure of the graph $\Gamma_{SSD}$ associated to the entire class of single-step models.
From $\x \stackrel{i}{\sim} \y$ for $i = 1,\ldots,N$ we know that for any $\x$, there are $(\delta-1) N$ different vectors $\y$ which differ from $\x$ in a single position, where $\delta$ is the number of possible agent attributes.
Therefore, $\Gamma_{SSD}$ is a regular graph with degree $(\delta-1) N + 1$, because in our case, the system may loop by $y_i = x_i$.
As a matter of fact, our definition of adjacency as >>different in one position of the configuration<< is precisely the definition of so-called Hamming graphs which tells us that $\Gamma_{SSD} = H(N,\delta)$ (with loops).
In the case of the VM, where $\delta = 2$ we find $H(N,2)$ which corresponds to the $N$-dimensional hypercube.

As before, the transition probability matrix of the micro chain is denoted by $\Phat$ with $\Phat(\x,\y)$ being the probability for the transition from $\x$ to $\y$.
The previous considerations tell us that non-zero transition probabilities can exist only between two configurations that are linked in $H(N,\delta)$ plus the loop ($\Phat(\x,\x)$).
Therefore, each row of $\Phat$ contains no more than $(\delta-1) N +1$ non-zero entries.
In the computation of $\Phat$ we concentrate on pairs of adjacent configurations.
For $\x \stackrel{i}{\sim}  \y$ with $x_i \neq y_i$ we have
\begin{equation} 
\Phat(\x,\y) = \sum\limits_{\substack{(i,j,\ldots,k,\lambda): \\y_i = \u(x_i,x_j,\ldots,x_k,\lambda)} }^{} \omega(i,j,\ldots,k,\lambda)
\label{eq:PhatSSD}
\end{equation}
which is the conjoint probability to choose agents and a rule $(i,j,\ldots,k,\lambda)$ such that the $i$th agent changes its attribute by $y_i = \u(x_i,x_j,\ldots,x_k,\lambda)$.
For the probability that the model remains in $\x$, $\Phat(\x,\x)$, we have
\begin{equation}
\Phat(\x,\x) = 1 - \sum^{}_{\substack{\y {\sim} \x}} \Phat(\x,\y).
\label{eq:PhatSSD02}
\end{equation}
Eq. (\ref{eq:PhatSSD}) makes visible that the probability distribution $\omega$ plays the crucial role in the computation of the elements of $\Phat$, a fact that has been exploited in Ref. \cite{Banisch2013acs}.

\section{Graph Symmetries and Markov Chain Aggregation}

Markov chain aggregation concerns the question of what happens when the micro-level process -- defined by the micro chain $(\BSigma,\Phat)$ --  is projected onto a coarser partition of the state space $\BSigma$.
Such a situation naturally arises if the ABM is observed not at the micro level of $\BSigma$, but rather in terms of a measure $\phi$ on $\BSigma$ by which all configuration in $\BSigma$ that give rise to the same measurement are mapped into the same macro state, say $X_k \in \X$.
The first important question then concerns the \emph{lumpability} of the micro chain with respect to the partition $\X$.
In the case of lumpability, the resulting macro-level process is still a Markov chain and the transition rates $P$ can be obtained in a relatively simple way from the microscopic transition matrix $\Phat$.
Fig. \ref{fig:ProjectionGeneral} illustrates such a projection construction.

\begin{figure}[hbtp]
\centering
\includegraphics[width=0.95\linewidth]{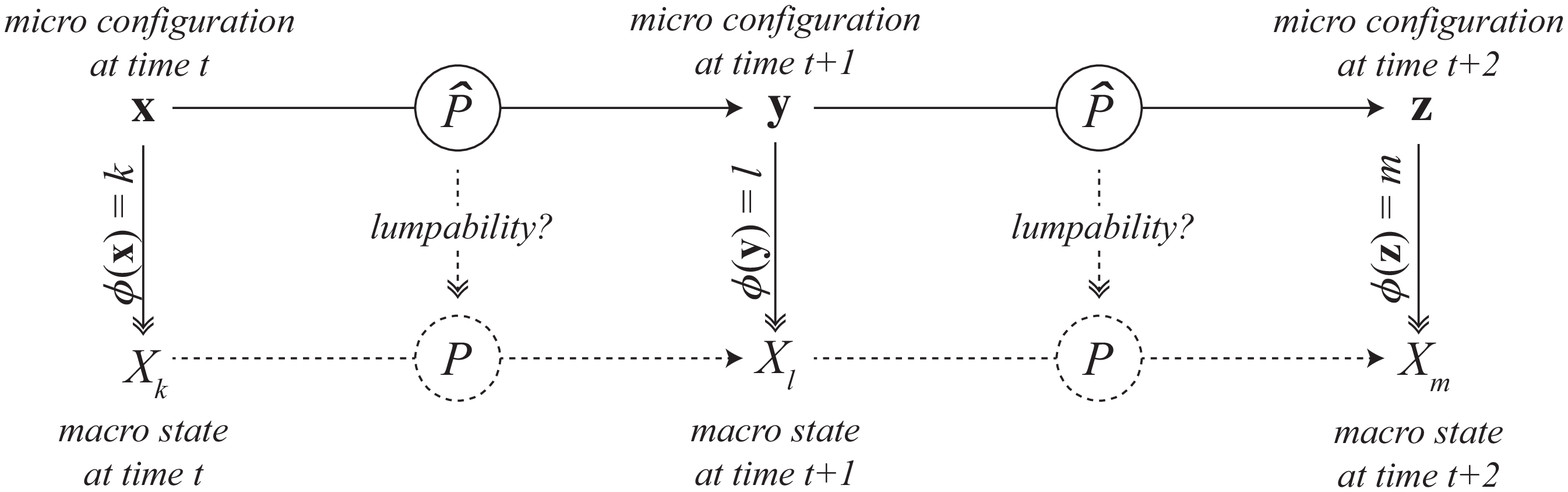}
\caption{A micro process ($\x,\y, {\bf z} \in \BSigma$) is observed ($\phi$) at a higher level and this observation defines another macro level process ($X_k,X_l,X_m \in \X$). The micro process is a Markov chain with transition matrix $\Phat$. The macro process is a Markov chain (with $P$) only in the case of lumpability.}
\label{fig:ProjectionGeneral}
\end{figure}

Necessary and sufficient conditions for lumpability are provided by Thm. 6.3.2 in \cite{Kemeny1976}.
Let us denote by $\phat_{\x Y} $ the conjoint probability for $\x$ to go to elements $\y \in Y$ where $Y \subseteq \BSigma$ is a subset of the configuration space.
Thm. 6.3.2 in \cite{Kemeny1976} states that a Markov chain $(\Phat,\BSigma)$ is \emph{lumpable} with respect to a partition $\X = (X_1,\ldots,X_P)$ if for every two subsets $X_k$ and $X_l$ the sum $\hat{p}_{\x X_l} = \sum\limits_{\y \in X_l}^{} \hat{p}_{\x \y}$ is equal for all $\x \in X_k$.
Moreover, these common values form the transition probabilities $P(X_k,X_l)$ for a new chain $(P,\X)$.

Since ABM micro chains can be seen as random walks on regular graphs, it is convenient to provide lumpability conditions in form of the symmetry structure of the micro chain.
Let us restate the respective result (previously introduced in a similar form in \cite{Banisch2012son}, Prop. 3.1):
\begin{theorem}
\label{thm:symmetry}
Let  $(\bf{\Sigma}, \hat{P})$ be a Markov chain and $\x,\y$ elements of $\BSigma$.
Consider a partition $\X$ of $\BSigma$ and along with $\X$ a transformation group $\G$ acting on $\BSigma$ that generates $\X$.
(That is, the orbits of $\G$ on $\BSigma$ construct $\X$.)
If the Markov transition probability $\hat{P}$ is symmetric with respect to $\G$,
\begin{equation}\label{eq:symmetry_lumpability}
 {\hat P} (\x,\y) = \: {\hat P} ({\hat{\sigma}}(\x),{\hat{\sigma}}(\y)) : \forall  {\hat{\sigma}} \in {\G},
\end{equation}
the partition $\X = (X_1,\dots, X_{n})$ is (strongly) lumpable.
\end{theorem}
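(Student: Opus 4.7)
The plan is to derive Thm.~\ref{thm:symmetry} directly from the Kemeny--Snell criterion already quoted in the text (Thm.~6.3.2 of \cite{Kemeny1976}), which asserts that $(\BSigma,\Phat)$ is lumpable with respect to $\X = (X_1,\dots,X_P)$ exactly when, for every ordered pair of blocks $X_k, X_l$, the conjoint probability $\phat_{\x X_l} = \sum_{\y \in X_l} \Phat(\x,\y)$ takes the same value for all $\x \in X_k$. Thus the entire argument reduces to verifying this equal--row--sum property under the symmetry hypothesis (\ref{eq:symmetry_lumpability}).

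I would fix two blocks $X_k, X_l \in \X$ and two representatives $\x, \x' \in X_k$. Because $\X$ is by assumption the orbit partition of $\G$, the elements $\x$ and $\x'$ lie in a common orbit, so there exists $\hat{\sigma} \in \G$ with $\x' = \hat{\sigma}(\x)$. Applying the symmetry condition term-by-term inside the row-sum gives
\begin{equation*}
\phat_{\x X_l} \;=\; \sum_{\y \in X_l} \Phat(\x,\y) \;=\; \sum_{\y \in X_l} \Phat(\hat{\sigma}(\x),\hat{\sigma}(\y)) \;=\; \sum_{\y \in X_l} \Phat(\x',\hat{\sigma}(\y)),
\end{equation*}
and a change of summation variable $\y' = \hat{\sigma}(\y)$ should finish the argument, provided $\hat{\sigma}$ restricts to a bijection $X_l \to X_l$.

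The only genuinely delicate point is this last bijectivity claim, and it is the one place where the hypothesis ``$\X$ is generated by the orbits of $\G$'' is used in an essential way. Since $X_l$ is itself an orbit (or, more loosely, a union of orbits) of $\G$, every $\hat{\sigma} \in \G$ maps $X_l$ into $X_l$; combined with the fact that $\hat{\sigma}$ is already a bijection on the ambient set $\BSigma$, this forces the restriction $\hat{\sigma}|_{X_l}$ to be a bijection of $X_l$ with itself, so the substitution is legitimate. This yields $\phat_{\x X_l} = \phat_{\x' X_l}$ and, as an immediate by-product, identifies the induced macro-chain transition probabilities as $P(X_k, X_l) = \phat_{\x X_l}$ for any representative $\x \in X_k$, completing the reduction to Kemeny--Snell.
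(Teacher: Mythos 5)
Your proposal is correct and follows essentially the same route as the paper's own proof: both reduce the claim to the Kemeny--Snell equal--row--sum criterion and establish it by rewriting $\sum_{\y \in X_l}\Phat(\x,\y)$ via the symmetry condition and a change of variable $\y \mapsto \hat{\sigma}(\y)$, justified by the fact that each $\hat{\sigma}$ is a bijection preserving the blocks. If anything, you are slightly more explicit than the paper in noting that two arbitrary representatives $\x,\x' \in X_k$ are linked by some $\hat{\sigma}\in\G$ precisely because the blocks are orbits, a point the paper leaves implicit.
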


\begin{proof}
For the proof it is sufficient to show that any two configurations $\x$ and $\x'$ with $\x' = \hat{\sigma}(\x)$ satisfy 
\begin{equation}
\phat_{\x Y} = \sum\limits_{\y \in Y}^{} \hat{P}(\x,\y) = \sum\limits_{\y \in Y}^{} \hat{P}(\x',\y) = \phat_{\x' Y}
 \label{eq:3.macroEquivalence01}
\end{equation}
for all $Y \in \X$. 
Consider any two subsets $X,Y \in \X$ and take $\x \in X$.
Because $\G$ preserves the partition it is true that $\x' \in X$.
Now we have to show that Eq. (\ref{eq:3.macroEquivalence01}) holds.
First the probability for $\x' = \hat{\sigma}(\x)$ to go to an element $\y \in Y$ is
\begin{equation}
\phat_{\hat{\sigma}(\x) Y} = \sum\limits_{\y \in Y}^{} \hat{P}(\hat{\sigma}(\x),\y).
\end{equation}
Because the $\hat{\sigma}$ are bijections and preserve $\X$ we have $\hat{\sigma}(Y) = Y$ and there is for every $\y \in Y$ exactly one $\hat{\sigma}(\y) \in Y$. Therefore we can substitute
\begin{equation}
\phat_{\hat{\sigma}(\x) Y} = \sum\limits_{\y \in Y}^{} \hat{P}(\hat{\sigma}(\x),\hat{\sigma}(\y)) = \sum\limits_{\y \in Y}^{} \hat{P}(\x,\y) = \phat_{\x Y},
\end{equation}
where the second equality comes by the symmetry condition (\ref{eq:symmetry_lumpability}) that $\Phat(\x,\y) = \Phat(\hat{\sigma}(\x),\hat{\sigma}(\y))$.
\end{proof}

The usefulness of the lumpability conditions stated in Thm. \ref{thm:symmetry} becomes apparent recalling that ABMs can be seen as random walks on regular graphs defined by the functional graph or >>grammar<< of the model $\Gamma = (\BSigma,\mathcal{F}_Z)$.
The full specification of the micro process $(\BSigma, \hat{P})$ is obtained by assigning transition probabilities to the connections in $\Gamma$ and we can interpret this as a weighted graph.
The regularities of $(\BSigma, \hat{P})$ are captured by a number of non-trivial automorphisms which, in the case of ABMs, reflect the symmetries of the models.

In fact, Thm. \ref{thm:symmetry} allows to systematically exploit the symmetries of an agent model in the construction of partitions with respect to which the micro chain is lumpable.
Namely, the symmetry requirement in Thm. \ref{thm:symmetry}, that is, Eq. (\ref{eq:symmetry_lumpability}), corresponds precisely to the usual definition of automorphisms of $(\BSigma, \hat{P})$.
The set of all permutations $\hat{\sigma}$ that satisfy (\ref{eq:symmetry_lumpability}) corresponds then to the automorphism group of $(\BSigma, \hat{P})$.
\begin{lemma}
Let $\G$ be the automorphism group of the micro chain $(\BSigma, \hat{P})$.
The orbits of $\G$ define a lumpable partition $\X$ such that every pair of micro configurations $\x,\x' \in \BSigma$ for which $\exists\hat{\sigma} \in \G$ such that $\x' = \hat{\sigma}(\x)$ belong to the same subset $X_i \in \X$.
\label{thm:Lambda}
\end{lemma}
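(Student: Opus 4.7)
The plan is to realize this lemma as a direct corollary of Theorem \ref{thm:symmetry}, with the only genuine work being to check that the three hypotheses of that theorem are met when we take $\G$ to be the full automorphism group. First I would define the candidate partition by the orbit equivalence relation $\x \sim \x'$ iff there exists $\hat{\sigma} \in \G$ with $\x' = \hat{\sigma}(\x)$; because $\G$ is a group (contains the identity, inverses, and is closed under composition), this relation is reflexive, symmetric and transitive, so the orbits do form a partition $\X = (X_1,\ldots,X_n)$ of $\BSigma$. By this very construction the last clause of the lemma is immediate: two micro configurations related by an element of $\G$ lie in a common orbit and hence in a common block $X_i$.

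Next I would verify the two non-trivial hypotheses of Theorem \ref{thm:symmetry}. The partition-preservation property, $\hat{\sigma}(X_i) = X_i$ for every $\hat{\sigma} \in \G$, is automatic for the orbit partition, since applying any group element to an orbit just permutes its elements. The symmetry condition (\ref{eq:symmetry_lumpability}) is exactly the defining property of an automorphism of the weighted graph / micro chain $(\BSigma,\hat{P})$: a permutation $\hat{\sigma}$ of $\BSigma$ belongs to the automorphism group precisely when $\hat{P}(\x,\y) = \hat{P}(\hat{\sigma}(\x),\hat{\sigma}(\y))$ for all $\x,\y \in \BSigma$. Since by assumption $\G$ is this automorphism group, every $\hat{\sigma} \in \G$ satisfies the required identity.

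With both hypotheses in hand I would invoke Theorem \ref{thm:symmetry} to conclude that the orbit partition $\X$ is strongly lumpable for $(\BSigma,\hat{P})$, which is the content of the lemma. I do not anticipate a serious obstacle; the only point requiring care is to observe that Theorem \ref{thm:symmetry} is stated for an arbitrary transformation group that generates a given partition, whereas here the partition is generated by $\G$ itself, so partition-invariance need not be checked separately. The lemma thus functions as a specialization of Theorem \ref{thm:symmetry} to the canonical choice of group (the full automorphism group) and the canonical choice of partition (its orbits), and the proof is correspondingly short.
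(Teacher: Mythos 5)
Your proposal is correct and follows exactly the route the paper intends: the lemma is an immediate specialization of Theorem \ref{thm:symmetry}, since the orbits of the automorphism group form a partition that the group preserves by construction, and the symmetry condition (\ref{eq:symmetry_lumpability}) is precisely the definition of an automorphism of $(\BSigma,\hat{P})$. The paper itself leaves this verification implicit (treating the lemma as a direct corollary, with the subsequent remark extending it to subgroups via closure), so your write-up is, if anything, more explicit than the original.
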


\begin{note}
Lemma \ref{thm:Lambda} actually applies to any $\G$ that is a proper subgroup of the automorphism group of $(\BSigma, \hat{P})$.
The basic requirement for such a subset $\G$ to be a group is that be closed under the group operation which establishes that $\hat{\sigma}(X_i) = X_i$.
With the closure property, it is easy that any such subgroup $\G$ defines a lumpable partition in the sense of Thm. \ref{thm:symmetry}.
\end{note}

\section{Groups of Automorphisms, Macro Chains and System Properties}

In this section we illustrate the previous ideas at the example of three state single-step dynamics.
Consider a system of $N$ agents each one characterized by an attribute $x_i \in \{a,b,c\}$, that is $\delta = 3$.
As discussed in Section \ref{cha:3.SSD}, the corresponding graph $\Gamma$ encoding all the possible transitions is the Hamming graph $H(N,3)$.
The nodes $\x,\y$ in $H(N,3)$ correspond to all possible agent combinations and are written as vectors $\x = (x_1,\ldots,x_N)$ with symbols $x_i \in \{a,b,c\}$.
The automorphism group of $H(N,3)$ is composed of two groups generated by operations changing the order of elements in the vector (agent permutations) and by permutations acting on the set of symbols $\S=\{a,b,c\}$ (agent attributes).
Namely, it is given by the direct product 
\begin{equation}
Aut(H(N,\delta)) = \SG_N \otimes \SG_{\delta}
\label{eq:AutHNdelta}
\end{equation}
of the symmetric group $\SG_N$ acting on the agents and the 
group $\SG_{\delta}$ acting on the agent attributes.

Let us first look at a very small system of $N = 2$ agents and $\delta = 3$ states.
The corresponding microscopic structure -- the graph $H(2,3)$ -- is shown on the l.h.s. of Fig. \ref{fig:MicroVM.2agents3states.AutW}.
It also illustrates the action of $\SG_N$ on the $\x,\y \in \BSigma$, that is, the bijection induced on the configuration space by permuting the agent labels.
Noteworthy, in the case of $N=2$ there is only one alternative ordering of agents denoted here as $\hat{\sigma}_{\omega}(\x)$ which takes $(x_1,x_2) 
\stackrel{\hat{\sigma}_{\omega}}{\longleftrightarrow} (x_2,x_1)$.
The respective group $\SG_{N=2}$ therefore induces a partition in which all configurations $\x,\y$ with the same number of attributes $a,b,c$ are \emph{lumped} into the same set, which we may denote as $X_{\langle k_a,k_{b},k_c\rangle}$.
See r.h.s. of Fig. \ref{fig:MicroVM.2agents3states.AutW}.


\begin{figure}[htbp]
	\centering
		\includegraphics[width=0.80\textwidth]{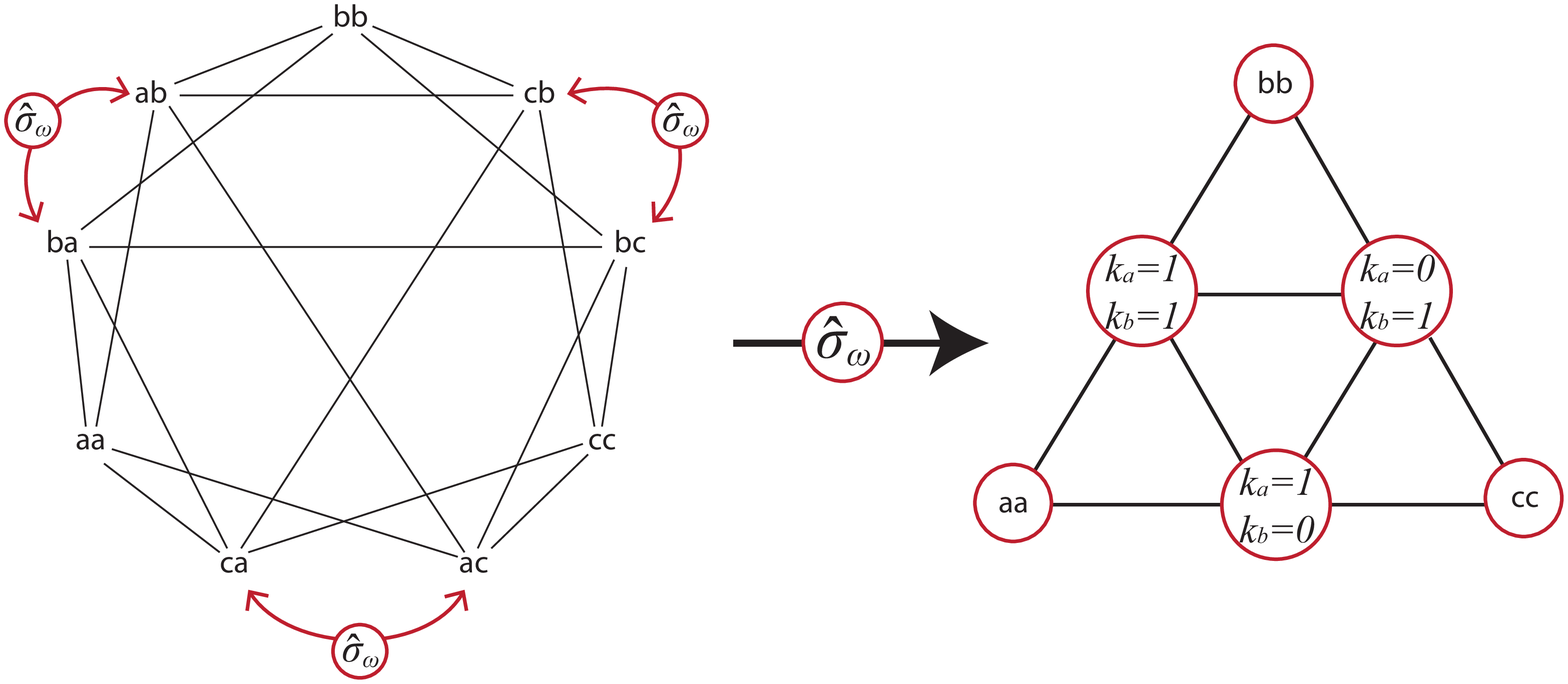}
	\caption{$H(2,3)$ and the reduction induced by $\SG_N$.}
	\label{fig:MicroVM.2agents3states.AutW}
\end{figure}

\begin{figure}[ht]
	\centering
		 \includegraphics[width=.90\textwidth]{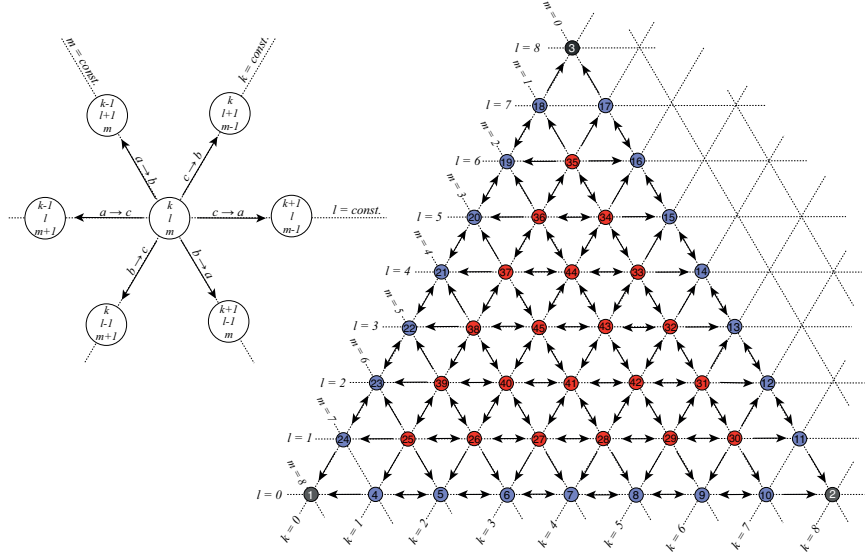}
	\caption{Transition structure (l.h.s) and state topology (r.h.s) of the VM with three attributes for $N=8$.}
	\label{fig:StateTopologyAndTransitions.T3.N8}
\end{figure}

More generally in the case of $N$ agents and $\delta$ agent attributes the group $\SG_{N}$ induces a partition of the configuration space $\BSigma$ by which all configurations with the same attribute frequencies are collected in the same macro set.
Let us define $N_s (\x)$ to be the number of agents in the configuration $\x$ with attribute $s$, $s = 1,\ldots,\delta$,  and then $X_{\langle k_1, k_2, \dots, k_{\delta}\rangle} \subset \BSigma$ as
\begin{equation}
\begin{split}
X_{\langle k_1,  \dots, k_s, \dots, k_{\delta} \rangle}  = \left\{ \vphantom{\sum_{s=1}^{\delta}}\x \in \BSigma \ : N_1(\x) = k_1, \dots, N_s(\x) = k_s, \dots\right.\\\left.\ldots, N_{\delta} (\x) = k_{\delta} \mbox{ and } \ \sum_{s=1}^{\delta}  k_{s} = N\right\}.
\end{split}
\label{eq:X< >}
\end{equation}
Each $X_{\langle k_1, k_2, \dots, k_{\delta}\rangle}$ contains all the configurations $\x$ in which exactly $k_s$ agents hold attribute $s$ for any $s$.
We use the notation $\langle k_1, k_2, \dots, k_{\delta }\rangle$ to indicate that $\sum_{s=1}^{\delta }  k_{s} = N$.
Therefore, the reduced state space is organized as a $\delta$-simplex lattice, see Fig. \ref{fig:StateTopologyAndTransitions.T3.N8}.

For a model with $N=8$ and $\delta = 3$ the resulting reduced state space is shown in Fig. \ref{fig:StateTopologyAndTransitions.T3.N8}.
The transition structure depicted in Fig. \ref{fig:StateTopologyAndTransitions.T3.N8} corresponds to the VM.
The number of $a$, $b$ and $c$ agents is denoted by (respectively) $k$, $l$ and $m$ so that $\X =\{ X_{\langle k,l,m \rangle} : 0 \leq k,l,m \leq N, k+l+m = N \}$.
The number of states for a system with $N$ agents is $S = \sum_{i = 0}^N (i+1) = \frac{(N + 1)(N + 2)}{2}$.

For Voter-like models -- used, for instance, as models of opinion and social dynamics -- it is not unusual to study the dynamical behavior by looking at the time evolution of the respective attribute frequencies.
It is important to notice, however, that the resulting partition is lumpable only if the transition matrix $\Phat$ is symmetric with respect to the action of $\SG_N$ on $\BSigma$, namely if Thm. \ref{thm:symmetry} holds for $\SG_N$.
We have shown in \cite{Banisch2012son} that this is only true for homogeneous mixing and the case of inhomogeneous interaction topologies is discussed in~\cite{Banisch2013acs}.

\begin{figure}[htbp]
	\centering
		\includegraphics[width=0.95\textwidth]{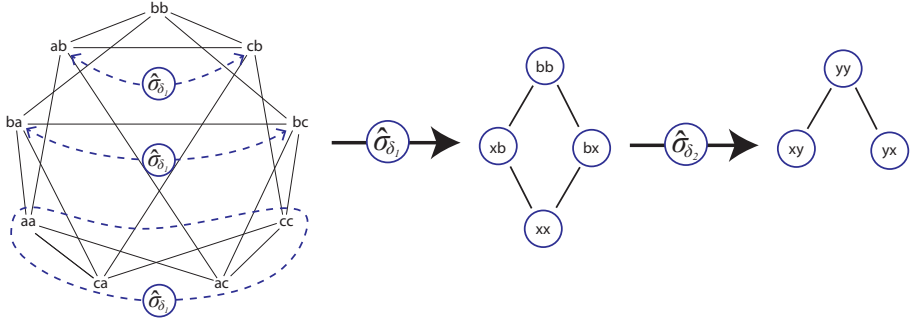}
	\caption{$H(2,3)$ and the reductions induced by $\SG_{\delta}$.}
	\label{fig:MicroVM.2agents3states.AutD}
\end{figure}

Let us now consider $\SG_{\delta}$.
On the l.h.s. of Fig. \ref{fig:MicroVM.2agents3states.AutD} the graph $H(2,3)$ is shown along with the bijections on it induced by permutation of attributes $a$ and $c$, $abc \stackrel{\hat{\sigma}_{\delta_1}}{\longleftrightarrow} cba )$.
Effectively, this corresponds to the situation of looking at >>one attribute ($b$) against the other two ($x = a \cup c$)<<.
Noteworthy, taking that perspective (see graph in the middle of Fig. \ref{fig:MicroVM.2agents3states.AutD}) corresponds to a reduction of $H(2,3)$ to $H(2,2)$ or, more generally, of $H(N,3)$ to the hypercube $H(N,2)$.
This means that, under the assumption of symmetric agent rules with respect to the attributes, single-step models with $\delta$ states are reducible to the binary case.

Moreover, even the binary case allows for further reduction (see r.h.s. of Fig. \ref{fig:MicroVM.2agents3states.AutD}).
Namely, assuming the additional symmetry $bx \stackrel{\hat{\sigma}_{\delta_2}}{\longleftrightarrow} xb )$ corresponding in a binary setting to the simultaneous flip of all agent states $x_i \rightarrow \bar{x}_i, \forall i$.
The VM is a nice example in which independent of the interaction topology, $\Phat(\x,\y) = \Phat(\bar{\x},\bar{\y})$.
This reduces the state space to one half of $H(N,2)$, which we shall denote as $H_{1/2}(N,2)$.

\begin{figure}[htbp]
	\centering
		\includegraphics[width=0.95\textwidth]{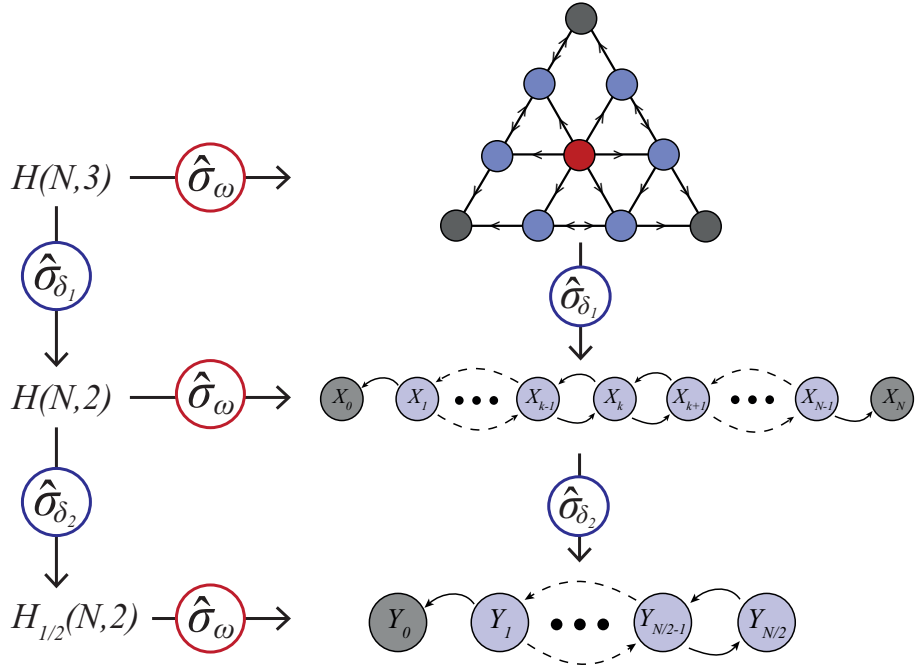}
	\caption{Different levels of description are associated to different symmetry groups of $H(N,3)$.}
	\label{fig:AutProjectionScheme}
\end{figure}

The most interesting reductions can be reached by the combination of $\SG_N$ and $\SG_{\delta}$.
Fig. \ref{fig:AutProjectionScheme} shows possible combinations and the resulting macroscopic state spaces starting from $H(N,3)$.
For instance, partitioning $H(N,3)$ by using the set of agent permutations $\SG_N$ leads to state space organized as a triangular lattice (see also Fig. \ref{fig:StateTopologyAndTransitions.T3.N8}).
Lumpability of the micro process $(\BSigma,\Phat)$ on $H(N,3)$ with respect to this state space rests upon the symmetry of the agent interaction probabilities with respect to all agent permutations (\cite{Banisch2013acs}, see \cite{Banisch2014acscodym} for a discussion of the non-lumpable case).
From the triangular structure shown on the upper right in Fig. \ref{fig:AutProjectionScheme}, a further reduction ca be obtained by taking into account the symmetry of the interaction rules with respect to (at least) one pair of attributes, which we have denoted as $\hat{\sigma}_{\delta_1}$.
The resulting macro process on $\X = (X_0,\ldots,X_N)$ is a random walk on the line with $N+1$ states, known as Moran process for the VM interaction (after \cite{Moran1958}).
In a binary setting, the macro states $X_k$ collect all micro configurations with $k$ agents in state $\square$ (and therefore $N-k$ agents in $\blacksquare$).
Notice that a Markov projection to the Moran process is possible also for $\delta > 3$ if the micro process is symmetric with respect to permutations of (at least) $\delta-1$ attributes.
The group of transformations associated to this partition may be written as $\SG_N \otimes \SG_{\delta-1} \subset Aut(H(N,\delta))$.

The reduction obtained by using the full automorphism group of $H(N,3)$ is shown on the bottom of Fig. \ref{fig:AutProjectionScheme}.
With respect to the Moran process on $\X = (X_0,\ldots,X_N)$, it means that the pairs $\{X_k,X_{(N-k)}\}$ are lumped into the same state $Y_k$.
This can be done if we have for any $k$, $P(X_k,X_{k \pm 1}) = P(X_{(N-k)},X_{(N-k) \mp 1})$.
As a matter of fact, this description still captures the number of agents in the same state, but now information about in which state they are is omitted.
This is only possible (lumpable) if the model implements completely symmetric interaction rules.

\section{Summary}

This paper analyses the probabilistic structure of a class of agent-based models (ABMs).
In an ABM in which $N$ agents can be in $\delta$ different states there are $\delta^N$ possible agent configurations and each iteration of the model takes one configuration into another one.
It is therefore convenient to conceive of the agent configurations as the nodes of a huge directed graph and to link two configurations $\x,\y$ whenever the application of the ABM rules to $\x$ may lead to $\y$ in one step.
If a model operates with a sequential update scheme by which one agent is chosen to update its state at a time, transitions are only allowed between system configurations that differ with respect to a single element (agent).
The graph associated to those single-step models is the Hamming graph $H(N,\delta)$.

The fact that a single-step ABM corresponds to a random walk on a regular graph allows for a systematic study of the symmetries in the dynamical structure of an ABM.
Namely, the existence of non-trivial automorphisms of the ABM micro chain tells us that certain sets of agent configurations can be interchanged without changing the probability structure of the random walk.
These sets of micro states can be aggregated or lumped into a single macro state and the resulting macro-level process is still a Markov chain.
If the microscopic rules are symmetric with respect agent ($\SG_N$) and attribute ($\SG_{\delta}$) permutations the full automorphism group of $H(N,\delta)$ is realized and allows for a reduction from $\delta^N$ micro to around $N/2$ macro states.
Moreover, different combinations of subgroups of automorphisms and the reductions they imply are rather meaningful in terms of observables and system properties.

Notice finally that other update schemes (beyond single-step dynamics) -- even the case of synchronous update\footnote{The author thanks J\"{u}rgen Jost for this suggestion.} -- do not necessarily affect the symmetries of the micro process.
The described approach may be applied to these cases as well.
Extending the framework to models with continuous agent attributes is another challenging issue to be addressed by future work.


\end{document}